\documentclass[12pt]{iopart}

\usepackage{lineno,hyperref}
\modulolinenumbers[5]
\usepackage[utf8]{inputenc}
\usepackage{xcolor}
\usepackage{subfig}
\usepackage{graphicx}
\usepackage{amsthm}
\usepackage{longtable}
\usepackage{multirow}
\usepackage{epstopdf}

\newtheorem{theorem}{Theorem}
\newtheorem{remark}{Remark}

\begin{document}

\title{Endemic state equivalence between non-Markovian SEIS and Markov SIS model in complex networks}

\author[address1]{Igor Tomovski$^{1,*}$, Lasko Basnarkov$^{2,1}$, Alajdin Abazi$^{1,3}$}
\address{$^1$Research Center for Computer Science and Information Technologies, Macedonian Academy of Sciences and Arts, Bul. Krste Misirkov, 2, P.O. Box 428, 1000 Skopje, Macedonia }

\address{$^2$Faculty of Computer Science and Engineering, "Ss Cyril and Methodius" University - Skopje,  ul.Rudzer Boshkovikj 16, P.O. Box 393,1000 Skopje, Macedonia}
\address{$^3$South East European University, Ilindenska n.335, 1200 Tetovo, Macedonia}
\address{$^*$Author to whom any correspondence should be addressed.}
\ead{igor@manu.edu.mk}

\vspace{10pt}
\begin{indented}
\item[]
\end{indented}

\begin{abstract}
In the light of several major epidemic events that emerged in the past two decades, and emphasized by the COVID-19 pandemics, the non-Markovian spreading models occurring on complex networks gained significant attention from the scientific community. Following this interest, in this article, we explore the relations that exist between the non-Markovian SEIS (Susceptible--Exposed--Infectious--Susceptible) and the classical Markov SIS, as basic re-occurring virus spreading models in complex networks. We investigate the similarities and seek for equivalences both for the discrete-time and the continuous-time forms. First, we formally introduce the continuous-time non-Markovian SEIS model, and derive the epidemic threshold in a strict mathematical procedure. Then we present the main result of the paper that, providing certain relations between process parameters hold, the stationary-state solutions of the status probabilities in the non-Markovian SEIS  may be found from the stationary state probabilities of the Markov SIS model. This result has a two-fold significance. First, it simplifies the computational complexity of the non-Markovian model in practical applications, where only the stationary distribution of the state probabilities is required. Next, it defines the epidemic threshold of the non-Markovian SEIS model, without the necessity of a thrall mathematical analysis. We present this result both in analytical form, and confirm the result trough numerical simulations. Furthermore, as of secondary importance, in an analytical procedure we show that each Markov SIS may be represented as non-Markovian SEIS model.
\end{abstract}

%
\noindent{\it Keywords}: Complex networks, Epidemic models, non-Markovian processes, Endemic states, Stability analysis
%
%
%
%

\section{Introduction}

Following the outbreak of several recent epidemics, the SARS, the MERS, Bird flu etc., and emphasized by the Covid-19 pandemics, the non-Markovian models captured the attention of the complex networks research community that studies stochastic spreading processes  \cite{starnini, Nowzari, boguna1, delft_nm1, delft_nm2, delft_nm3, kiss1, kiss2, kiss3, Feng2019}. The shift of interest from the Markov to the non-Markovian realm was caused by the realization that no status transition, that an individual (node) undergoes following the contraction of the spread agent, may neither occur simultaneously, nor the probability of status transition on a daily bases is constant, as may be seen from the collected medical data \cite{chun,Qineabc1202}. For example, when an individual contracts a virus, the individual has no capacity to instantly spread the agent to the neighboring nodes. A certain viral quantity should first be produced (the node acts as an incubator for the agent), in order for the infected individual to become infectious. The time required for the virus to replicate over the necessary viral load threshold, varies from individual to individual and follows a time dependent probability distribution \cite{He2020}. On the other hand, the recovery process is a product of complex biochemical interactions within the hosts immune system, that take several days in order for proper deference response to be prepared and for the virus to be eradicated (as an example of the immune system modeling one may refer to \cite{Perelson2002}). Again, individual differences lead to specific probability distribution of time from Exposure to Recovery.

In our recent paper \cite{nmseis}, we introduced the non-Markovian SEIS (Susceptible -- Exposed -- Infectious -- Susceptible) model as a basic mathematical non-Markovian form that describes re-occurring spreading processes, taking place on complex networks. In the model formulation, We assumed that status transitions from Exposed (non-Infectious) to Infectious status and  Exposed (both non-Infectious and Infectious) back to Susceptible status, follow temporal distribution described with Discrete Time Probability Functions (DTPFs): 
\begin{itemize}
\item daily manifesting function $b(\tau)$: probability that an Exposed and previously non-Infectious node, becomes Infectious exactly at day $\tau$;
\item manifesting function $B(\tau)$: probability that an Exposed node, is Infectious at day $\tau$;
\item daily recovering function $\gamma(\tau)$: probability that an Exposed node, recovers exactly at day $\tau$;
\item recovering function $\Gamma(\tau)$: probability that an Exposed node, is recovered by day $\tau$;
\end{itemize}

In this paper, We first extend the discrete-time concept to continuous-time non-Markovian model form. Adequately, the functions $\gamma(\tau)$, $\Gamma(\tau)$, $b(\tau)$ and $B(\tau)$ in this scenario are continuous, and are further referred to as Continuous-Time Probability Functions (CTPFs), with $\gamma(\tau)$ and $b(\tau)$ referred to as instance recovering probability and instance manifesting probability, correspondingly.  For the continuous-time form, we derive the epdemic threshold in a strict mathematical procedure. Then, the main result of this paper is presented: that for each non-Markovian SEIS model (discrete-time or continuous-time), exists a Markov SIS model, such that the stationary state probabilities of each node being exposed in the SEIS model, equals the stationary probability that the node is Infected in the SIS model, providing certain relations between process parameters hold. We consider this result to be of at-most importance for the following reason: non-Markovian models, although highly accurate in analyzing natural phenomena, are computationally sufficiently more demanding. Investigating these models with utilization of Markov analogs (as shown here as possible), significantly reduces the computational complexity in acquiring significant data related to the endemic state of the diseases. The presented analysis directly leads to relations that define the epidemic threshold for the non-Markovian (SEIS) models occurring on complex networks, without the necessity of a thrall mathematical procedure. Similar type of equivalences, between the non-Markovian and the Markov SIS model, using different settings and approaches, have been established by the authors in \cite{starnini} and \cite{Feng2019}

As a result of secondary importance, it is shown that an arbitrary Markov SIS model occurring on complex networks, may be represented an non-Markovian SEIS model. This equivalence is only vaguely mentioned and numerically illustrated in the Conclusions of \cite{nmseis}; here we show this feature trough a rigorous mathematical procedure. One should note that similar analysis was conducted in respect to the non-Markovian and Markov SIR model in \cite{nmsir}.

\section{The model}
The discrete-time form of the SEIS model analyzed in this paper is originally introduced in \cite{nmseis}. For completeness, in what follows, we re-state the formal definition of the observed process, and for more details we refer the readers to the cited paper. 

Consider a network represented with the adjacency matrix $\mathbf{A}$. In the general case, the network is directed, weighted, and strongly connected; consequently the matrix $\mathbf{A}=[a_{ij}]$ is asymmetric, with $0 \le a_{ij} \le 1$ and irreducible (Perron-Frobenious theorem for non-negative irreducible matrices applies). 

The SEIS model is a status model in which, in respect to the spreading agent, each node is in one of the three following statuses: Susceptible, Exposed and Infectious. A node is in status Exposed, at time $t$, if at the given instance it contains the spread agent. Exposed node may be Infectious (manifesting infectiousness) or non-Infectious. Node is Infectious if it contains the agent (is Exposed) and is capable to spread the agent to the neighboring nodes.

When Susceptible node contract the spread agent, the node becomes Exposed (and generally assumed non-Infectious). The process of agent contraction by the Susceptible node plays a role of a trigger event ($\tau=0$): all consequent processes within the node are time-referenced to this transition. Exposed (but non-Infectious node) may become Infectious exactly at time $\tau$ after the trigger event with probability $b(\tau)$. Exposed node is Infectious at time $\tau$ following the trigger event with probability $B(\tau)$. To stress the difference between $b(\tau)$ and $B(\tau)$, as explained in \cite{nmseis}, the model allows for two different types of Infectiousness manifestation:
\begin{itemize}
	\item Cumulative manifestation -- in this case $b(\tau)$ has a character of a mass probability function in the discrete-time case scenario and density probability function in the continuous-time scenario. Adequately, $B(\tau)=\sum_{k=0}^{\tau}b(k)$ (discrete-time), with $\sum_{k=0}^{T-1} b(\tau) \le 1$,  or $B(\tau)=\int_{0}^{\tau}b(\tau')d \tau'$ (continuous-time), with $\int_{0}^{T}b(\tau) d\tau \le 1$, has a cumulative character, with the sign "$<$" indicating that the Exposed node may not necessarily become Infectious prior to recovery. This type of behavior is typical for epidemic diseases;
	\item Random manifestation -- in this case $B(\tau)=b(\tau)$ has a random character, with $0 \le b(\tau) \le 1$ being the only restriction.
\end{itemize}
Exposed node recovers and becomes Susceptible again exactly at time $\tau$ following the exposure, with probability $\gamma(\tau)$, and is recovered at time $\tau$ with probability $\Gamma(\tau)=\sum_{k=0}^{\tau} \gamma(k)$, with $\sum_{\tau=0}^{T-1} \gamma(\tau)=1$, in the discrete-time model, and  $\Gamma(\tau)=\int_{0}^{\tau} \gamma(\tau') d\tau'$, with $\int_{0}^{T} \gamma(\tau) d\tau=1$ in the conitinuous case. The probability function (DTPF/CTPF) $\gamma(\tau)$ is a m.p.f in the discrete-time scenario, and p.d.f. in the continuous-time case, with $\Gamma(\tau)$ being a cumulative probability function. In the modeling of the SEIS process, we widely use the complement $\overline{\Gamma}(\tau)=1-\Gamma(\tau)$.

In the formal sense, the dynamical behavior of the model is defined as follows: node $i$ is Exposed at time $t$ if it contracted the agent at time $t-\tau$ and did not recover in the time interval $[t-\tau,t]$;  node $i$ is Infectious at time $t$ if it contracted the agent at time $t-\tau$, did not recover in the time interval $[t-\tau,t]$ and is capable to spread the agent to its neighbors at time $t$. We consider that the process lasts for maximum $T$ time units, with $\Gamma(T-1)=1$, in the discrete-time scenario, and $\Gamma(T)=1$, in the continuous-time version.

Considering the definitions stated above, and following \cite{nmseis}, the discrete-time SEIS model is mathematically defined in the following form:
\begin{eqnarray}
&&p^E_i(t+1)=\sum_{\tau=0}^{T-1}(1-p^E_i(t-\tau)) \overline{\Gamma} (\tau) \mathcal{P}_i(t-\tau)\label{disSEIS}\\
&&p^I_i(t+1)=\sum_{\tau=0}^{T-1} (1-p^E_i(t-\tau)) B (\tau) \overline{\Gamma} (\tau) \mathcal{P}_i(t-\tau), \nonumber
\end{eqnarray}

with $\mathcal{P}_i(t)$ representing the product-like term:
\begin{eqnarray}
&&\mathcal{P}_i(t)=1-\prod_{j=1}^N (1-p^I_j(t)  a_{ij} \beta), \nonumber
\end{eqnarray}
that denotes the probability that a Susceptible node $i$ will contract the spread agent from its neighbours, at time $t$ \cite{wang,Chakrabarti,gomez1,gomez2}.

\subsection{Continuous-time SEIS model}

In this section, we introduce the continuous-time non-Markovian SEIS model, as an extension to the discrete-time model (\ref{disSEIS}). Similar model forms, represented as non-Markovian SIS models may be found in \cite{starnini, delft_nm1,delft_nm2,delft_nm3, Feng2019}. The difference between the approach we take in our formulation and the approach in the cited papers, relates to the process of the transfer of the infectious material from the Exposed/Infected node to its neighbors. In the cited papers the non-Markovian character is expressed in the form of time-distributed infection rate $\beta(\tau)$. In our approach, the non-Markovian feature lays within the capability of the Exposed node to transfer the agent to its neighbors (non-Markovinity of manifestation), while transfer itself is of Markov type. 

Prior to the formal introduction of the continuous-time SEIS model, we stress the major difference that exist between discrete-time and continuous-time modeling approach. It is a standard practice in modeling spreading phenomena in continuous time to assume that the transfer of the spread agent, within an infinitesimal time interval $\Delta \tau$, may occur from a single sources (neighbour), only (no-multiple infectious events assumption). This notion transforms the product-like term in the following manner:
\begin{eqnarray}
1-\prod_{j=1}^N (1-p^I_j(t-\tau)  a_{ij} \beta \Delta \tau)&=&\sum p^I_j(t-\tau)  a_{ij} \beta\Delta \tau +O(\Delta \tau^2). \nonumber
\end{eqnarray}
For sufficiently small $\Delta \tau$, the term $O(\Delta \tau^2)$ is neglected, and the appropriate sum-like term \cite{delft}, obtained.

Bearing in mind the differences, We may now re-write the system of equations (\ref{disSEIS}) as follows:
\begin{eqnarray}
p^E_i(t)&=&\sum_{k=1}^{T/(\Delta \tau)}(1-p^E_i(t-k \Delta \tau)) \sum_{j=1}^N p^I_j(t-k \Delta \tau) \overline{\Gamma} ((k-1) \Delta \tau) a_{ij} \beta\Delta \tau \nonumber\\ 
&=&\sum_{k=1}^{T/(\Delta \tau)}(1-p^E_i(t-k \Delta \tau)) \sum_{j=1}^N p^I_j(t-k \Delta \tau) \overline{\Gamma} (k \Delta \tau) a_{ij} \beta\Delta \tau -\nonumber\\
&-&\sum_{k=1}^{T/(\Delta \tau)}(1-p^E_i(t-k \Delta \tau)) \sum_{j=1}^N p^I_j(t-k \Delta \tau) \overline{\Gamma}' (k \Delta \tau) a_{ij} \beta\Delta \tau^2 \label{contsum}\\
p^I_i(t)&=&\sum_{k=1}^{T/(\Delta \tau)} (1-p^E_i(t-k \Delta \tau)) \times \nonumber\\ &\times& \sum_{j=1}^N p^I_j(t-k \Delta \tau) B ((k-1) \Delta \tau) \overline{\Gamma} ((k-1) \Delta \tau) a_{ij} \beta\Delta \tau \nonumber\\
&=&\sum_{k=1}^{T/(\Delta \tau)}(1-p^E_i(t-k \Delta \tau)) \sum_{j=1}^N p^I_j(t-k \Delta \tau) B(k \Delta \tau) \overline{\Gamma} (k \Delta \tau) a_{ij} \beta\Delta \tau -\nonumber\\
&-&\sum_{k=1}^{T/(\Delta \tau)}(1-p^E_i(t-k \Delta \tau)) \sum_{j=1}^N p^I_j(t-k \Delta \tau) [B(k \Delta \tau)\overline{\Gamma} (k \Delta \tau)]' a_{ij} \beta\Delta \tau^2 \nonumber
\end{eqnarray}

When $\Delta \tau \rightarrow 0$, providing no discontinuities of first kind exist in $\overline{\Gamma}(\tau)$ or $\overline{\Gamma}(\tau)B(\tau)$, the terms multiplied by $\Delta \tau^2$ may be neglected. In what follows, we show that this term may be neglected even in the presence of finite number of first order discontinuities. The analyses is focused around the second set of $N$ equation in the system (\ref{contsum}), related to the $p^I_i(t)$, variables; by analogy, the same analysis is valid for the set of equations related to the $p^E_i(t)$ variables. 

Consider a point $0 \le \tau_i <T$, such that a discontinuity of first kind $\overline{\Gamma}(\tau)$ or $\overline{\Gamma}(\tau)B(\tau)$ exists at $\tau=\tau_i$. Let $\Delta \tau$ be an integration constant, such that the series $\tau_k=k \Delta \tau$ provides a proper sampling of $\overline{\Gamma}(\tau)$ and $\overline{\Gamma}(\tau)B(\tau)$. Let $k_i$ be an index such that $k_{i-1} \Delta \tau < \tau_i \le k_i \Delta \tau$. Under these assumptions, the following inequality may be considered:
\begin{eqnarray}
&&|(1-p^E_i(t-k_i \Delta \tau)) \sum_{j=1}^N p^I_j(t-k_i \Delta \tau) [B(k_i \Delta \tau)\overline{\Gamma} (k_i \Delta \tau)]' a_{ij} \beta\Delta \tau^2| \le \nonumber\\
&& |(1-p^E_i(t-k_i \Delta \tau))| \sum_{j=1}^N |p^I_j(t-k_i \Delta \tau)| |[B(k_i \Delta \tau)\overline{\Gamma} (k_i \Delta \tau)]'| a_{ij} \beta\Delta \tau^2 \approx \nonumber\\
&& |(1-p^E_i(t-k_i \Delta \tau))| \times \nonumber \\
&& \times \sum_{j=1}^N |p^I_j(t-k_i \Delta \tau)| \frac{|B(k_i \Delta \tau)\overline{\Gamma}(k_i \Delta \tau) - B(k_{i-1} \Delta \tau)\overline{\Gamma}(k_{i-1} \Delta \tau)|} {\Delta \tau} a_{ij} \beta\Delta \tau^2 \nonumber\\ &&\le N \Delta \tau \nonumber
\end{eqnarray}

Let R be a total number of first kind discontinuities of either $\overline{\Gamma}(\tau)$ and $\overline{\Gamma}(\tau)B(\tau)$. Then, in accordance with the relation above:

\begin{eqnarray}
&&|\sum_{k=1}^{T/(\Delta \tau)}(1-p^E_i(t-k \Delta \tau)) \sum_{j=1}^N p^I_j(t-k \Delta \tau) [B(k \Delta \tau)\overline{\Gamma} (k \Delta \tau)]' a_{ij} \beta\Delta \tau^2| \le \nonumber\\
&&|\sum_{k \ne k_1,.,k_R}(1-p^E_i(t-k \Delta \tau)) \sum_{j=1}^N p^I_j(t-k \Delta \tau) [B(k \Delta \tau)\overline{\Gamma} (k \Delta \tau)]' a_{ij} \beta \Delta \tau^2| \nonumber\\ &&+  NR \Delta \tau\nonumber
\end{eqnarray}

The preceding analyses indicates that the the terms multiplied by $\Delta \tau^2$ in the set of $N$ equations, related to the $p^I_i(t)$ in (\ref{contsum}), may be neglected, since for finite $R$, $NR \Delta \tau$ may be maid arbitrary small, with the right choice of $\Delta \tau$. Similar analysis, leading to the same conclusion, may be conducted for the set of equations related to $p^E_i(t)$ in (\ref{contsum}). Consequently, from (\ref{contsum}) and considering $\Delta \tau \rightarrow 0$, one obtains the integral form of equations for the non-Markovian SEIS model occurring on complex networks in continuous-time: 

\begin{eqnarray}
p^E_i(t)&=&\int_{0}^{T}(1-p^E_i(t-\tau)) s(\tau) \sum_{j=1}^N p^I_j(t- \tau) \overline{\Gamma} (\tau) a_{ij} \beta d\tau \label{contsys}\\
p^I_i(t)&=&\int_{0}^{T} (1-p^E_i(t-\tau)) s(\tau) \sum_{j=1}^N p^I_j(t-\tau) B (\tau) \overline{\Gamma} (\tau) a_{ij} \beta d\tau\nonumber
\end{eqnarray}
with $s(t)$ being the Heaviside function. In what follows we consider both $\Gamma(\tau)$ and $B(\tau)$ to be smooth around the point $\tau=0$, and the Heaviside function may be neglected in the system of equations (\ref{contsys}).

\subsubsection{Differential form}

In this segment, we show that the non-Markovian SEIS model, represented with (\ref{contsys}), may be written in a differential form, as well. The purpose of this model-form is to relate the non-Markovian SEIS model and the Markov SIS model, in order to investigate the circumstances under which an arbitrary Markov SIS may be presented as non-Markovian SEIS.

Starting from the system of equations (\ref{contsum}), one obtains:
\begin{eqnarray}
&&p^E_i(t+\Delta \tau)-p^E_i(t)= (1-p^E_i(t)) \sum_{j=1}^N p^I_j(t) \overline{\Gamma} (0) a_{ij} \beta\Delta \tau + \nonumber\\ 
&&+\sum_{k=1}^{T/(\Delta \tau)} (1-p^E_i(t-k \Delta \tau)) \sum_{j=1}^N p^I_j(t-k \Delta \tau) [\overline{\Gamma} (k \Delta \tau) -  \overline{\Gamma} ((k-1) \Delta \tau)]a_{ij} \beta\Delta \tau \nonumber\\
&&- (1-p^E_i(t-T- \Delta \tau)) \sum_{j=1}^N p^I_j(t-T-\Delta \tau) \overline{\Gamma} (T) a_{ij} \beta\Delta \tau \nonumber
\end{eqnarray}

Considering that $\overline{\Gamma}(T)=0$, by dividing both sides of the equation with $\Delta \tau$ and by letting $\Delta \tau \rightarrow 0$, the following relation may be written :
\begin{eqnarray}
\frac{dp^E_i(t)}{dt}&=&  (1-p^E_i(t))  \sum_{j=1}^N p^I_j(t) \overline{\Gamma} (0) a_{ij} \beta +\nonumber\\ &+& \int_{0}^{T} s(\tau) (1-p^E_i(t-\tau))  \sum_{j=1}^N p^I_j(t-\tau) a_{ij} \beta \overline{\Gamma}' (\tau)  d\tau\label{dif1}
\end{eqnarray}

Similarly for $p^I_i$ one obtains:

\begin{eqnarray}
\frac{dp^I_i(t)}{dt}&=& (1-p^E_i(t)) \sum_{j=1}^N p^I_j(t) \overline{\Gamma} (0) B(0) a_{ij} \beta +\nonumber\\ &+& \int_{0}^{T}  s(\tau) (1-p^E_i(t-\tau)) \sum_{j=1}^N p^I_j(t-\tau) a_{ij} \beta [\overline{\Gamma} (\tau)B(\tau)]'  d\tau\label{dif2}
\end{eqnarray}

\subsubsection{Epidemic threshold for the continuous-time model}

One of the main results that are derived in the theoretical analysis of the stochastic spreading processes occurring on complex networks, is the determination of the epidemic threshold. Epidemic threshold defines the critical relation between the process parameters and network topology, that separate the parametric region in which the network is disease free, from the region in which a permanent epidemic exists.

To find the epidemic threshold for the continuous-time non-Markovian SEIS model, we resort to the investigation of the stability criteria of the dynamical system (\ref{contsys}), around the point of epidemic origin, i.e. $p^E_i(t)=0$, $p^I_i(t)=0$, for all $i$. In that sense, we consider the following:

\begin{theorem}
Consider a directed, weighted and strongly connected graph, represented with the adjacency matrix $\mathbf{A}=[a_{ij}]$ that is, consequently, non-negative and irreducible. The $2N$ vector $[p^E_i(t), p^I_i(t)]=[0,0]$, i.e. the epidemic origin, is a globally asymptotically stable point of equilibrium of the dynamical system (\ref{contsys}), providing the following relation holds:
\begin{eqnarray}
\frac{1}{\beta \lambda_1(\mathbf{A})} > \int_{0}^{T}B (\tau) \overline{\Gamma}(\tau)d\tau \nonumber
\end{eqnarray}
with $\lambda_1(\mathbf{A})$ being the leading eigenvalue of the matrix $\mathbf{A}$. 
\end{theorem}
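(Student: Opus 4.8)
The plan is to establish global asymptotic stability of the disease-free equilibrium by combining a local linearization argument with a global monotonicity/comparison bound. First I would linearize the integral system (\ref{contsys}) about the origin. Near $[p^E_i,p^I_i]=[0,0]$, the factors $(1-p^E_i(t-\tau))$ are replaced by $1$, so the $p^I$-equation becomes a linear integral equation whose kernel is $\beta\, a_{ij}\, B(\tau)\overline{\Gamma}(\tau)$. Seeking exponential solutions $p^I_i(t)=v_i e^{zt}$ reduces the system to the characteristic equation
\begin{eqnarray}
z \text{-independent form: } \quad v_i = \beta\Bigl(\int_0^T B(\tau)\overline{\Gamma}(\tau)e^{-z\tau}\,d\tau\Bigr)\sum_{j=1}^N a_{ij} v_j. \nonumber
\end{eqnarray}
By Perron--Frobenius applied to the non-negative irreducible $\mathbf{A}$, the relevant mode is the one aligned with the leading eigenvector, giving the scalar condition $1=\beta\,\lambda_1(\mathbf{A})\int_0^T B(\tau)\overline{\Gamma}(\tau)e^{-z\tau}\,d\tau$. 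I would then argue that all roots $z$ have negative real part exactly when the stated threshold inequality holds, by noting that at $z=0$ the right-hand side equals $\beta\lambda_1(\mathbf{A})\int_0^T B(\tau)\overline{\Gamma}(\tau)\,d\tau<1$ and that $|\int_0^T B\overline{\Gamma}e^{-z\tau}d\tau|\le\int_0^T B\overline{\Gamma}e^{-\Re(z)\tau}d\tau$ is monotone decreasing in $\Re(z)$, forcing any root into the left half-plane.

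For the global statement I would upgrade local stability to a genuine global bound. The key structural observation is that for all $i$ one has $0\le 1-p^E_i(t-\tau)\le 1$ and $a_{ij}\ge 0$, so the nonlinear right-hand side of the $p^I$-equation in (\ref{contsys}) is dominated termwise by its linearization:
\begin{eqnarray}
p^I_i(t)\;\le\;\beta\int_0^T\sum_{j=1}^N p^I_j(t-\tau)\,B(\tau)\overline{\Gamma}(\tau)\,a_{ij}\,d\tau. \nonumber
\end{eqnarray}
I would package this as a vector inequality $\mathbf{p}^I(t)\le \beta\,\mathbf{A}\int_0^T B(\tau)\overline{\Gamma}(\tau)\,\mathbf{p}^I(t-\tau)\,d\tau$ and invoke a comparison principle: the solution is bounded above by the solution of the corresponding linear renewal system with the same initial data, which decays to zero under the threshold condition by the spectral analysis above. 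Since $p^I_i\ge 0$ always, a squeeze then forces $p^I_i(t)\to 0$, and feeding this back into the $p^E$-equation (whose kernel $\overline{\Gamma}(\tau)$ is likewise integrable and whose driving term is the vanishing $p^I_j$) yields $p^E_i(t)\to 0$ as well.

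The main obstacle I anticipate is making the comparison/domination argument fully rigorous for a \emph{distributed-delay} (renewal-type) system rather than a pointwise ODE. One must verify that the linear integral operator with non-negative kernel is order-preserving and that the spectral bound governing its asymptotic decay is precisely the leading root of the characteristic equation — i.e. that the ``spectral abscissa'' of the renewal kernel crosses zero exactly at the stated threshold. Care is also needed at the finitely many first-kind discontinuities of $\overline{\Gamma}(\tau)$ and $B(\tau)\overline{\Gamma}(\tau)$ (the same ones handled earlier via the $NR\Delta\tau$ estimate), so that $\int_0^T B(\tau)\overline{\Gamma}(\tau)e^{-z\tau}d\tau$ is well-defined and the monotonicity in $\Re(z)$ is not spoiled; establishing that the leading characteristic root is real (so the threshold reduces to the clean inequality stated, with no oscillatory root crossing first) is the delicate analytic point that ties the whole argument together.
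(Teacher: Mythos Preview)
Your proposal is correct and follows essentially the same route as the paper: bound the nonlinear system above by its linearization (the paper's auxiliary system (\ref{auxsys})), analyze the linear renewal equation spectrally (the paper via Laplace transform, you via the equivalent exponential ansatz), and use the modulus bound $|\int_0^T B\overline{\Gamma}e^{-s\tau}d\tau|\le\int_0^T B\overline{\Gamma}\,d\tau$ for $\Re(s)\ge 0$ together with Perron--Frobenius to force all poles into the left half-plane, then cascade $p^I\to 0\Rightarrow p^E\to 0$. Your closing worry about needing the leading root to be real is unnecessary: the modulus inequality already rules out \emph{any} root with $\Re(s)\ge 0$, real or complex, which is exactly how the paper closes the argument.
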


\begin{proof}
Consider the system (\ref{contsys}). Since $p^E_i(t), p^I_i(t) \in [0,1]$, for all $i$, and $B(\tau), \overline{\Gamma}(\tau) \ge 0$, the argument under the integral is strictly postitve, so the following relation holds:
\begin{eqnarray}
p^E_i(t)&\le&\int_{0}^{T} \sum_{j=1}^N p^I_j(t- \tau) \overline{\Gamma} (\tau) a_{ij} \beta d\tau \nonumber\\
p^I_i(t)&\le&\int_{0}^{T} \sum_{j=1}^N p^I_j(t-\tau) B (\tau) \overline{\Gamma} (\tau) a_{ij} \beta d\tau\nonumber
\end{eqnarray}
In other words, the dynamical behaviour of the system (\ref{contsys}) is bounded from bellow by the epidemic origin, and from above, by the dynamical system:
\begin{eqnarray}
p'^E_i(t)&=&\int_{0}^{T} \sum_{j=1}^N p'^I_j(t- \tau) \overline{\Gamma} (\tau) a_{ij} \beta d\tau \label{auxsys}\\
p'^I_i(t)&=&\int_{0}^{T} \sum_{j=1}^N p'^I_j(t-\tau) B (\tau) \overline{\Gamma} (\tau) a_{ij} \beta d\tau\nonumber
\end{eqnarray}

Consequently, if $lim_{t \rightarrow \infty} p'^E_i(t) \rightarrow 0$, $lim_{t \rightarrow \infty} p'^I_i(t) \rightarrow 0$, then $lim_{t \rightarrow \infty} p^E_i(t) \rightarrow 0$, $lim_{t \rightarrow \infty} p^I_i(t) \rightarrow 0$, as well. In that sense, the proof of the global stability of the (epidemic origin of the) system (\ref{contsys}), reduces to proof of the global stability of the system (\ref{auxsys}).

One may notice that the second set of $N$ equations in (\ref{auxsys}) is self-sufficient. In that sense the dynamical stability of the system (\ref{auxsys}) reduces to the dynamical stability of this set of equations only: from (\ref{auxsys}) follows that, if $lim_{t \rightarrow \infty} p'^I_i(t) \rightarrow 0$, then $lim_{t \rightarrow \infty} p'^E_i(t) \rightarrow 0$.

By conducting a Laplace transform on both sides of each of the equations from the second set of $N$ equations in (\ref{auxsys}), following the methodology in \cite{nmsir}, one obtains:
\begin{eqnarray}
&&P'^I_i(s)= \int_{0}^{\infty} p'^I_i(t)e^{-st}dt =\int_{0}^{\infty}e^{-st} \int_{0}^{T} \sum_{j=1}^N p'^I_j(t-\tau) B (\tau) \overline{\Gamma} (\tau) a_{ij} \beta d\tau dt\nonumber =\nonumber\\
&&= \int_{0}^{T} B (\tau) \overline{\Gamma}(\tau)  e^{-s\tau} \beta \left( \sum_{j=1}^N a_{ij} \int_{-T}^{\infty}p'^I_j(u) e^{-su}du \right) d\tau = \label{LaplEqn}\\
&&=\beta \mathcal{L}(B (\tau) \overline{\Gamma}(\tau)) \sum_{j=1}^N a_{ij}\left( P'^I_j(s) + \int_{-T}^{0}p'^I_j(u) e^{-su}du \right)= \nonumber\\
&&= \beta \mathcal{L}(B (\tau) \overline{\Gamma}(\tau)) \sum_{j=1}^N a_{ij}\left( P'^I_j(s) + P'^I_j(0^-,s), \right)\nonumber
\end{eqnarray}
with
\begin{eqnarray}
L(S)=\mathcal{L}(B (\tau) \overline{\Gamma}(\tau)) = \int_0^{\infty} B (\tau) \overline{\Gamma}(\tau) e^{-s\tau}d\tau=\int_0^T B (\tau) \overline{\Gamma}(\tau) e^{-s\tau}d\tau, \label{BGlapl}
\end{eqnarray}
being the Laplace transform of the product $B (\tau) \overline{\Gamma}(\tau)$, and
\begin{eqnarray}
P'^I_j(0^-,s)=\int_{-T}^{0}p'^I_j(u) e^{-su}du \nonumber
\end{eqnarray}
the term that takes into account the initial conditions. The initial conditions and the nature of this term are discussed separately in Remark 1, at the end of this segment. 

Consider the vector $\mathbf{P'} ^I (s)=[P'^I_i(s)]$ and $\mathbf{P'} ^I (0,s)=[P'^I_i(0,s)]$. The system of equation (\ref{LaplEqn}) may be re-written in a vector form as:
\begin{eqnarray}
\mathbf{P'} ^I(s)= \beta \mathcal{L}(B (\tau) \overline{\Gamma}(\tau)) \mathbf{A}(\mathbf{P'} ^I(s)+\mathbf{P'} ^I(0,s)), \label{LaplVect}
\end{eqnarray}
leading to a solution in Laplace domain, in the following form:
\begin{eqnarray}
\mathbf{P'} ^I(s)&=& \beta \mathcal{L}(B (\tau) \overline{\Gamma}(\tau)) \mathbf{A} \mathbf{P'} ^I(0,s) (\mathbf{I} - \beta \mathcal{L}(B (\tau) \overline{\Gamma}(\tau) \mathbf{A})^{-1}= \nonumber\\
&=&\beta \mathcal{L}(B (\tau) \overline{\Gamma}(\tau)) \mathbf{A} \mathbf{P'} ^I(0,s) \frac{(\mathbf{I} - \beta \mathcal{L}(B (\tau) \overline{\Gamma}(\tau)) \mathbf{A})'}{det(\mathbf{I} - \beta \mathcal{L}(B (\tau) \overline{\Gamma}(\tau)) \mathbf{A})} \label{LapSol}
\end{eqnarray}
where $(\mathbf{I} - \beta \mathcal{L}(B (\tau) \overline{\Gamma}(\tau)) \mathbf{A})'$ is a matrix, which elements are the minors of the matrix $\mathbf{I} - \beta \mathcal{L}(B (\tau) \overline{\Gamma}(\tau)) \mathbf{A}$.

It is a well known result in the dynamical system theory, that the stability of the (origin of the) dynamical system  is determined by the position of the poles of the system in the complex plane. If all poles of the dynamical system lie within the left-half of the complex plane, i.e. $Re\{s\}<0$, the dynamical system is globally asymptotically stable. From the equation (\ref{LapSol}), one obtains that the poles of the system (\ref{auxsys}) may be determined from the zeroes of the equation:
\begin{eqnarray}
&&det(\mathbf{I} - \beta \mathcal{L}(B (\tau) \overline{\Gamma}(\tau)) \mathbf{A})=0 \nonumber
\end{eqnarray}
   
On the other hand: 
\begin{eqnarray}
det(\mathbf{I} - \beta \mathcal{L}(B (\tau) \overline{\Gamma}(\tau)) \mathbf{A})=\left(\beta \mathcal{L}(B (\tau) \overline{\Gamma}(\tau))\right)^N det\left(\frac{1}{\beta \mathcal{L}(B (\tau) \overline{\Gamma}(\tau))}\mathbf{I} -  \mathbf{A}\right)= \nonumber \\
=\left(\beta \mathcal{L}(B (\tau) \overline{\Gamma}(\tau)\right)^N \prod_{i=1}^N \left(\frac{1}{\beta \mathcal{L}(B (\tau) \overline{\Gamma}(\tau))} - \lambda_i(\mathbf{A})\right)= \prod_{i=1}^N \left(1 - \lambda_i(\mathbf{A}) \beta \mathcal{L}(B (\tau) \overline{\Gamma}(\tau))\right),\nonumber
\end{eqnarray}
with $\lambda_i(\mathbf{A})$, $i=1,..,N$, being the eigenvalues of the adjacency matrix $\mathbf{A}$.
From the last relation, the position of the poles of the dynamical system (\ref{auxsys}) are determined from the set of equations:
\begin{eqnarray}
&&1 - \lambda_i(\mathbf{A}) \beta \mathcal{L}(B (\tau) \overline{\Gamma}(\tau))=0 \label{final_solution}
\end{eqnarray}

Let $s_{i,k}$, be a pole of the dynamical system (\ref{auxsys}), associated with the $i$-th eigenvalue $\lambda_i(\mathbf{A})$ in the following manner:
\begin{eqnarray}
&&L(s_{i,k})=\mathcal{L}(B (\tau) \overline{\Gamma}(\tau)) \Big|_{s=s_{i,k}} = \int_0^T B (\tau) \overline{\Gamma}(\tau)e^{-s_{i,k}\tau}d\tau= \frac{1}{\beta \lambda_i(\mathbf{A})} \nonumber
\end{eqnarray}
Index $k$ allows for multiple poles associated with a single eigenvalue $\lambda_i(\mathbf{A})$.

From the definition of the Laplace transform of $\mathcal{L}(B (\tau) \overline{\Gamma}(\tau))$, i.e. equation (\ref{BGlapl}), the following conclusions hold:
\begin{itemize}
    \item if $Re\{s_{i,k}\}>0$, i.e. if a pole of the dynamical system (\ref{auxsys}) lies on the right-half of the complex plane, the value of the term $L(s_{i,k})$ lies within the circle $|z|=\int_{0}^{T}B (\tau) \overline{\Gamma}(\tau)d\tau$ in the complex plain;
    \item if $Re\{s_{i,k}\}<0$, i.e. if a pole of the dynamical system (\ref{auxsys}) lies on the left-half of the complex plane, the value of the term $L(s_{i,k})$ lies outside the circle $|z|=\int_{0}^{T}B (\tau) \overline{\Gamma}(\tau)d\tau$;
\end{itemize}
Bearing in mind the preceding discussion and the relation (\ref{final_solution}), the poles of the dynamical system (\ref{auxsys}) will lie within the left-half of the complex plane, i.e. $Re\{s_{i,k}\}<0$, providing:
\begin{eqnarray}
&& ||\mathcal{L}(B (\tau) \overline{\Gamma}(\tau))|| \big|_{s=s_{i,k}}= \frac{1}{\beta ||\lambda_i(\mathbf{A})||}> \int_{0}^{T}B (\tau) \overline{\Gamma}(\tau)d\tau, \nonumber
\end{eqnarray}
for all $i$ and $k$. 

From the Perron-Frobenius theorem for non-negative and irreducible matrices, the leading eigenvalue, $\lambda_1(\mathbf{A})$, of the matrix $\mathbf{A}$ is distinct, real and largest by module, compared to all other eigenvalues; therefore it minimizes the term $1/\beta ||\lambda_i(\mathbf{A}||$. For this reasons, providing: 
\begin{eqnarray}
&&\frac{1}{\beta \lambda_1(\mathbf{A})} > \int_{0}^{T}B (\tau) \overline{\Gamma}(\tau)d\tau \label{thresh_theory}
\end{eqnarray}
holds, the poles of the second set of $N$ equations of the dynamical system (\ref{auxsys}) lie within the left-half of the complex plane, resulting in $\lim_{t \rightarrow \infty} p'^I_i(t) \rightarrow 0$. This yields $lim_{t \rightarrow \infty} p'^E_i(t) \rightarrow 0$, $lim_{t \rightarrow \infty} p^I_i(t) \rightarrow 0$, $lim_{t \rightarrow \infty} p^E_i(t) \rightarrow 0$, and the point of epidemic origin of the dynamical system (\ref{contsys}) is globally asymptotically stable.

The Proof is completed.
\end{proof}

In accordance with the Theorem, the relation: 
\begin{eqnarray}
&&\frac{1}{\beta \lambda_1(\mathbf{A})} = \int_{0}^{T}B (\tau) \overline{\Gamma}(\tau)d\tau \label{thresh_theory_equ}
\end{eqnarray}
defines the boundary between the parametric region related to the state of permanent epidemic presence in the network and the region of epidemic absence. In that sense, the equation (\ref{thresh_theory_equ}) represents the epidemic threshold for the non-Markovian SEIS model occurring on complex networks.

\begin{remark}
The immense importance of proper inclusion of the initial conditions in the non-Markovian SEIS model, is discussed, for the discrete-time case, in details in \cite{nmseis}. The focus of this article is set around the model analysis in the endemic state, i.e. for circumstances in which initial conditions play a minor role. Therefore, it is fairly assumed that the initial epidemic outbreak occurred at moment preceding the beginning of analysis (time $t=0$), for both systems (\ref{disSEIS}) and (\ref{contsys}) and that properly collected set of initial condition, for the time period $(-T,0]$, exists. 
\end{remark}

\section{Equivalence of the stationary states of the non-Markovian SEIS and Markov SIS model}

In this section, we present the main result of the paper -- that the stationary state solutions of the non-Markovian SEIS model, may be found from the stationary state solutions of the Markov SIS model, providing certain relations between process parameters -- infection rates for both Markov SIS and non-Markovian SEIS model $\beta$, curing rate for the Markov SIS model $\gamma$ and the DTPFs (CTPFs) $\overline{\Gamma}(\tau)$ and $B(\tau)$ for the non-Markovian SEIS model -- hold. The purpose of the following analysis is to show that the steady state probabilities that the node $i$ is Exposed, $p_i^E$, for the non-Markovian SEIS model, may be directly related to the steady state probability of the node $i$ being Infected for the Markov SIS model. Knowing steady-state value of $p_i^E$, one may easily calculate the steady state probability of the node $i$ being Infectious, $p_i^I$, for the non-Markovian SEIS model. 

In the following text, we would frequently re-direct the attention of the reader between the Markov SIS and the non-Markovian SEIS model. In order to avoid any confusion in respect to the form we are referring to, in this and in the next Section, labels $M$ for the Markov SIS and $NM$ for the non-Markovian SEIS form would be used for the variables and the parameters (except for $B(\tau)$ and $\overline{\Gamma}(\tau)$), in the form of superscripts. 

\subsection{Discrete-time model}

Starting from the system of equations (\ref{disSEIS}), the stationary state solutions of the discrete-time non-Markovian SEIS model, $p^{E,NM}_{i}=p^{E,NM}_{i}(t)$, $p^{I,NM}_{i}=p^{I,NM}_{i}(t)$, may be found from the relations:
\begin{eqnarray}
p^{E,NM}_i&=&(1-p^{E,NM}_i) \left(1-\prod_{j=1}^N (1-p^{I,NM}_j a_{ij} \beta^{NM})\right)   \sum_{\tau=0}^{T-1} \overline{\Gamma}(\tau)  \label{discstat}\\
p^{I,NM}_i&=& (1-p^{E,NM}_i) \left(1-\prod_{j=1}^N (1-p^{I,NM}_j a_{ij} \beta^{NM})\right) \sum_{\tau=0}^{T-1} B (\tau) \overline{\Gamma} (\tau)  \nonumber
\end{eqnarray}

By dividing equations in (\ref{discstat}):
\begin{eqnarray}
p^{I,NM}_i&=&p^{E,NM}_i \frac{\sum_{\tau=0}^{T-1} B (\tau) \overline{\Gamma} (\tau) }{\sum_{\tau=0}^{T-1} \overline{\Gamma}(\tau)} \label{NMdisc_stat_IvsE}
\end{eqnarray}
From the eq.(\ref{discstat}) and eq.(\ref{NMdisc_stat_IvsE}), the stationary state solution of the non-Markovian SEIS model, in respect to the variable $p^{E,NM}_i(t)$, may be written in the following form:
\begin{eqnarray}
p^{E,NM}_i&=&(1-p^{E,NM}_i) \left(1-\prod_{j=1}^N (1-p^{E,NM}_j a_{ij} \beta^{eff})\right) \sum_{\tau=0}^{T-1} \overline{\Gamma} (\tau)   \label{NM_disc_stat}
\end{eqnarray}
with $\beta^{eff}$ defined with:
\begin{eqnarray}
\beta^{eff}&=&\beta^{NM} \frac{\sum_{\tau=0}^{T-1} B (\tau) \overline{\Gamma} (\tau) }{\sum_{\tau=0}^{T-1} \overline{\Gamma}(\tau)} \label{betaeff}
\end{eqnarray}

Discrete-time Markov SIS model occurring on complex networks is well known in literature, and is mathematically represented in the following form \cite{wang, Chakrabarti}:
\begin{eqnarray}
p^{I,M}_i(t+1)&=&(1-p^{I,M}_i(t)) \left(1-\prod_{j=1}^N (1-p^{I,M}_j  a_{ij} \beta^M)\right)  \label{classSIS}\\
&+&  (1-\gamma^M)p^{I,M}_i(t) \nonumber
\end{eqnarray}
The stationary state solution of the system of equations (\ref{classSIS}), $p^{I,M}_i(t)=p^{I,M}_i$:

\begin{eqnarray}
&&p^{I,M}_i=\frac{(1-p^{I,M}_i) \left(1-\prod_{j=1}^N (1-p^{I,M}_j  a_{ij} \beta^M)\right)}{\gamma^M}  \label{Mar_disc_stat}
\end{eqnarray}

We formulate the stationary state equivalence between the Markov SIS, and non-Markovian SEIS model in a sense that $p^{E,NM}_i=p^{I,M}_i$. By taking:
\begin{eqnarray}
\beta^M=\beta^{eff,NM}=\beta^{NM} \frac{\sum_{\tau=0}^{T-1} B (\tau) \overline{\Gamma} (\tau)}{\sum_{\tau=0}^{T-1} \overline{\Gamma} (\tau)}, \label{eqv_beta_dicrete}
\end{eqnarray}
and then dividing equations (\ref{Mar_disc_stat}) and (\ref{NM_disc_stat}), assuming equality $p^{I,M}_i(t)=p^{I,M}_i$ holds, one obtains:
\begin{eqnarray}
\gamma^{M} = \frac{1}{\sum_{\tau=0}^{T-1} \overline{\Gamma} (\tau)} \label{eqv_gamma_dicrete}
\end{eqnarray}
The analysis conducted above, leads to the following conclusion: providing relations (\ref{eqv_beta_dicrete}) and (\ref{eqv_gamma_dicrete}) hold, the stationary probabilities of each node being Exposed in the discrete non-Markovian SEIS model equals the stationary probabilities of the nodes being in status Infected, in the discrete Markov SIS model.

\subsection{Continuous-time SEIS model}

In the stationary state, $p^{E,NM}_{i}=p^{E,NM}_{i}(t)$, $p^{I,NM}_{i}=p^{I,NM}_{i}(t)$, the relation (\ref{contsys}) takes the following form:
\begin{eqnarray}
&&p^{E,NM}_i=(1-p^{E,NM}_i)   \int_{0}^{T} \overline{\Gamma}(\tau)  d\tau \sum_{j=1}^N a_{ij} \beta^{NM} p^{I,NM}_j  \label{contstat}\\
&&p^{I,NM}_i= (1-p^{E,NM}_i)   \int_{0}^{T} B (\tau) \overline{\Gamma} (\tau) d\tau \sum_{j=1}^N  a_{ij} \beta^{NM} p^{I,NM}_j , \nonumber
\end{eqnarray}

The last pair of equations represent nonlinear system from which one can determine the stationary probabilities. By dividing equations in (\ref{contstat}), We show that in the stationary state, the variables $p^{I,NM}_i$ and $p^{E,NM}_i$ are related in the following fashion:
\begin{eqnarray}
&&p^{I,NM}_i=p^{E,NM}_i \frac{\int_{0}^{T} B (\tau) \overline{\Gamma} (\tau) d\tau }{\int_{0}^{T} \overline{\Gamma} (\tau) d\tau} \label{eq_help}
\end{eqnarray}

If one substitutes the relation (\ref{eq_help}) into the first equation of the system (\ref{contstat}), one obtains:
\begin{eqnarray}
&&p^{E,NM}_i=(1-p^{E,NM}_i) \beta^{NM} \sum_{j=1}^N a_{ij}  p^{E,NM}_j   \int_{0}^{T} B(\tau) \overline{\Gamma}(\tau) d\tau \label{NMstat_sol}
\end{eqnarray}

The Markov SIS model in continuous form is well known \cite{delft}, and may be written as:
\begin{eqnarray}
&&\frac{dp^{I,M}_i(t)}{dt}=(1-p^{I,M}_i(t)) \beta^M\sum_{j=1}^N a_{ij}   p^{I,M}_j(t) -\gamma^M p^{I,M}_i(t) \label{classicalSIScont}
\end{eqnarray}

From (\ref{classicalSIScont}), the stationary state solution, $p^{I,M}_i(t)=p^{I,M}_i$, $dp^{I,M}_i(t)/dt=0$ of the Markov SIS model may be written in the form:
\begin{eqnarray}
&&p^{I,M}_i=\frac{(1-p^{I,M}_i) \beta^M\sum_{j=1}^N a_{ij}   p^{I,M}_j }{\gamma^M} \label{contstatSIS}
\end{eqnarray}

We seek equivalence between the models by equalizing the stationary probability of the arbitrary node being Infected in the Markov SIS model with the stationary probability of the corresponding node being Exposed in the non-Markovian SEIS model, i.e. $p^{I,M}=p^{E,NM}_i$. From relations (\ref{contstatSIS}) and  (\ref{NMstat_sol}) (dividing the two equations under the equality assumption), one obtains that the stationary state of the non-Markovian SEIS spreading process may be obtained form a Markov SIS process providing following relation holds:
\begin{eqnarray}
\frac{\beta^M}{\gamma^M} = \beta^{NM} \int_{0}^{T} B(\tau) \overline{\Gamma}(\tau) d\tau \label{conteqiv}
\end{eqnarray}

One should note the interesting difference between the relations (\ref{eqv_beta_dicrete}, \ref{eqv_gamma_dicrete}) and (\ref{conteqiv}). While (\ref{eqv_beta_dicrete}, \ref{eqv_gamma_dicrete}), in the discrete-time case,  fully define both the $\beta^M$ and $\gamma^M$ for the Markov SIS equivalent, relation (\ref{conteqiv}), in the continuous-time case, leaves certain degree of freedom in the choice of one of these parameters.

As an interesting consequence, one should note that from the second relationship in (\ref{contstat}), when the epidemic is weak $p^{E,NM}_i \approx 0$, one has
\begin{eqnarray}
    &&p_i^{I,NM} = \beta^{NM} \int_{0}^{T} B (\tau) \overline{\Gamma} (\tau) d\tau \sum_{j=1}^N  a_{ij}  p^{I,NM}_j. \nonumber
\end{eqnarray}

In the matrix form, if $\mathbf{P}^{I,NM}$ is the vector of probabilities of infectious state, and $\mathbf{A}$ is the network connectivity matrix, one has
\begin{eqnarray}
    &&\mathbf{P}^{I,NM} = \beta^{NM} \mathbf{A} \mathbf{P}^{I,NM}\int_{0}^{T} B (\tau) \overline{\Gamma} (\tau) d\tau . \nonumber
\end{eqnarray} 

The last expression indicates that the Infectiousness probability vector is eigenvector of scaled connectivity matrix in weak epidemic. This is similar to the previous result that the principal eigenvector determines the probabilities of infection in SEAIR model \cite{basnarkov2021seair}

\subsection{Determination of the epidemic threshold from the Endemic state equivalence}

Standard approach in determining the epidemic threshold for stochastic spreading processes, requires thrall mathematical procedure that is based either on establishing the stability criteria for the system around the point of epidemic origin, as done here in the Section 2.1.2, or complex statistical analysis. In this segment, we show that the endemic model equivalence, enables us to derive the epidemic threshold for non-Markovian SEIS model, directly from the well-known epidemic threshold for the Markov SIS process. 

The epidemic threshold for the Markov SIS process is defined as:
\begin{eqnarray}
&&\lambda_1(\mathbf{A})=\frac{\gamma}{\beta}, \label{thresh_class}
\end{eqnarray}
with $\lambda_1(\mathbf{A})$ being the largest (leading) eigenvalue of the adjacency matrix $\mathbf{A}$. This relation holds for both discrete-time SIS model \cite{wang, Chakrabarti}, as well as continuous-time SIS model \cite{delft, ganesh}.

Stationary-state equivalence between the Markov SIS and non-Markovian SEIS implies that, providing equations (\ref{eqv_beta_dicrete}, \ref{eqv_gamma_dicrete}) in the discrete-time, and (\ref{conteqiv}) in continuous-time SEIS model hold, then:
\begin{eqnarray}
&&p_i^{E,NM}=lim_{t \rightarrow \infty} p_i^{E,NM}(t)=lim_{t \rightarrow \infty} p_i^{I,M}(t)=p_i^{I,M} \nonumber
\end{eqnarray}
Consequently, if $p_i^{I,M}=0$, then $p_i^{E,NM}=0$, as well. Since, $p_i^{I,M}=0$ in the general case, holds for all $i$ if the system is parametersized "under" the epidemic threshold, in accordance with relations (\ref{eqv_beta_dicrete},\ref{eqv_gamma_dicrete}) and (\ref{conteqiv}), the epidemic threshold for the non-Markovian SEIS model is defined with:
\begin{eqnarray}
&&\lambda_1(\mathbf{A})=\frac{\gamma^M}{\beta^M}=\frac{1}{\beta^{NM}  \sum_{\tau=0}^{T-1} B(\tau) \overline{\Gamma}(\tau)}, \label{thresh_dis}
\end{eqnarray}
in the discrete-time case, and:
\begin{eqnarray}
&&\lambda_1(\mathbf{A})=\frac{\gamma^M}{\beta^M}=\frac{1}{\beta^{NM}  \int_{0}^{T} B(\tau) \overline{\Gamma}(\tau) d\tau}, \label{thresh_cont}
\end{eqnarray}
in the continuous case. For the discrete-time case the relation (\ref{thresh_dis}) is derived in precise analytical procedure in \cite{nmseis}. The relation (\ref{thresh_cont}) is identical to the equation (\ref{thresh_theory_equ}) derived in Section 2.1.2.

As presented in this subsection, the stationary-state model equivalence, leads directly to the result for the epidemic threshold of the non-Markovian model. No thrall statistical or system stability analysis is required to obtain this result. This feature further emphasizes the importance of the main result of the paper.

\section{Numerical analysis}

In this section we present the results of the numerical analysis, in order to validate the theoretical results obtained in the previous sections. The analysis is focused around the main result of the paper, i.e. the stationary state equivalence between the non-Markovian SEIS and the Markov SIS model.

The numerical analyzes in the paper are conducted on the following networks:
\begin{itemize}
 \item Barab\'{a}si - Albert \cite{BA} directed and weighted graph with $N=1000$ nodes, total of $L=3992$ uni-directional links , and the largest eigenvalue of the graph's adjacency matrix $\lambda_1(\mathbf{A}) =5.2922$. The reference BA1000, will be used for this network thought the paper. The network is derived from a symmetrical  BA(1000,1996) graph, with $N=1000$ nodes, generated with parameters $m_0=3$, $m=2$; 

 \item Watts - Strogatz \cite{WS} directed and weighted graph with $N=1000$ nodes, total of $L=6000$ uni-directional links, and the largest eigenvalue of the graphs adjacency matrix $\lambda_1(\mathbf{A}) =3.26997$. This network would be further referenced as WS1000. The network is derived from a symmetrical WS(1000,3000) graph, with $N=1000$ nodes, generated with parameters $r=3$, $p=0.2$; 
 \end{itemize}
 
 \subsection{Discrete-time model}
 In order to confirm the results for the discrete-time model, following \cite{nmseis}, we consider two different sets of DTPFs. The first set is related to the case of cumulative-like manifestation of Infectiousness, while in the second set the manifestation has random character. Both sets are presented in Table \ref{Table1}.

\begin{table}[h!]
\caption{DTPFs of the two discrete-time cases cases: Cumulative manifestation and Random manifestation}
\resizebox{\columnwidth}{!}{\begin{tabular}{ |c|c|c|c|c|c|c|c|c|c|c|c|c|}

 \hline
 &$\tau$ & 0 & 1 & 2 & 3 & 4 & 5 & 6 & 7 & 8 & 9 & 10 \\
 \hline
  \multirow{2}{*}{Cumulative manifestation} & $B(\tau)$ &0.0 & 0.0 & 0.0 & 0.1 & 0.2 & 0.3 & 0.5 & 0.8 & 1.0 & 1.0 &1.0\\

& $\Gamma(\tau)$& 0.0 & 0.0 & 0.0 & 0.0 & 0.0 & 0.2 & 0.3 & 0.5 & 0.8 & 0.9 & 1.0\\
 \hline
 \hline
   \multirow{2}{*}{Random manifestation} & $B(\tau)$ & 0.0 & 1.0 & 1.0 & 0.0 & 0.0 &1.0 & 1.0 & 0.0 & 0.0 & 1.0 & 1.0\\

    & $\Gamma(\tau)$ & 0.0 & 0.0 & 0.0 & 0.0 & 0.0 & 0.2 & 0.3 & 0.5 & 0.8 & 0.9 & 1.0\\
 \hline
\end{tabular}}

\label{Table1}
\end{table}

The simulations were conducted as follows: for the non-Markovian SEIS model, for both set of DTPFs, parameter $\beta^{NM}$ was varied in the range $\beta^{NM}\in [0,1)$, with a step of 0.01. Using relations (\ref{eqv_beta_dicrete},\ref{eqv_gamma_dicrete}), parameters $\beta^M$, varied in the parametric region $\beta^M \in [0,0.0.215875]$, in the Cumulative manifestation case, and $\beta^M \in [0,0.495]$  in the Random manifestation case, with $\gamma^M=0.27027$ in both cases.
 
 Results of the analysis in the discrete-time case are presented in Fig. \ref{slika_set4}. 
 
 \begin{figure} [h!]
    \centering
  \subfloat[\label{1a}]{%
       \includegraphics[width=0.48\columnwidth]{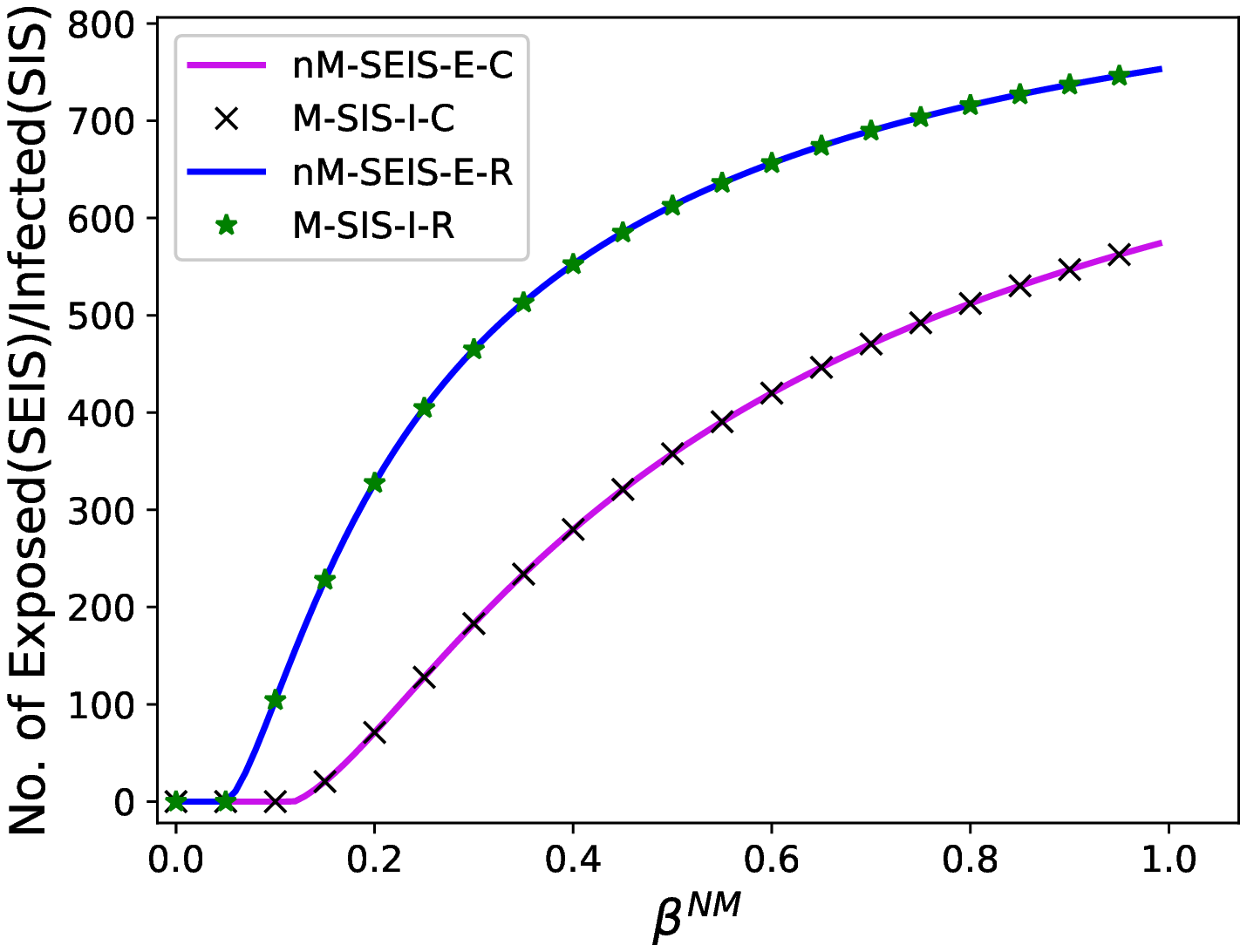}}
    \hfill
  \subfloat[\label{1b}]{%
        \includegraphics[width=0.48\columnwidth]{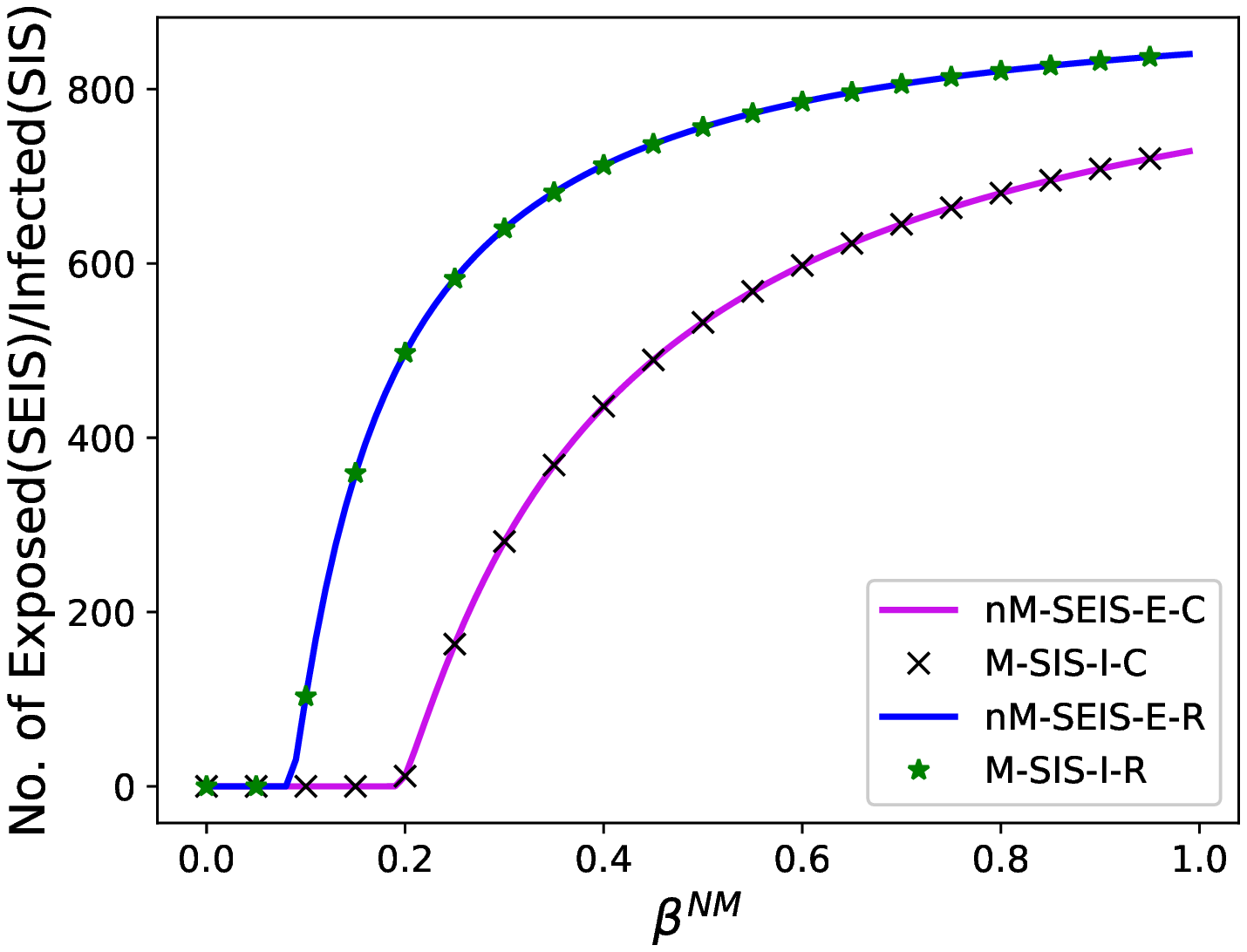}}

  \caption{Comparison between the steady-state values of the number of Exposed individuals in the discrete-time non-Markovian SEIS model and the Markov SIS model. Meaning of the symbols in the legend: nM--non-Markovian, M--Markov; E--Exposed, I--Infected; C--cumulative manifestation, R--random manifestation. (a) BA1000 graph; (b) WS1000 graph.}
  \label{slika_set4} 
\end{figure}

The results from the analysis indicate that there is a perfect overlap of stationary state solutions of both the discrete-time non-Markovian SEIS model and the Markov SIS model, providing relations (\ref{eqv_beta_dicrete},\ref{eqv_gamma_dicrete}) hold.

 \subsection{Continuous-time model}
 The CTPF's used in the continuous-time case (cumulative manifestation only) were constructed as follows: it is assumed that the process lasts for total of $T=65$ time units. The instance manifestation probabilities were obtained from the Weilbull \emph{p.d.f.}: $w(\tau;\alpha;\lambda)=\alpha \lambda (\tau \lambda)^{\alpha-1}\mathrm{exp}\left(-(\tau \lambda)^{\alpha}\right)$, with parameters $\alpha=2.04$ and $\lambda=0.103$ \cite[eq.2]{Qin2020}\cite{Qineabc1202}, normalized to 65 days: $b(\tau)=w(\tau;\alpha;\lambda)/\int_{0}^{65} w(\tau;\alpha;\lambda) d\tau$. The daily recovering probabilities were obtained from log-normal distribution $l(\tau; \mu; \sigma) = 1/(\tau\sigma {\sqrt {2\pi }}) \exp \left(-(\ln \tau-\mu)^{2})/\sigma ^{2}\right)$, $\mu=3$, $\sigma=0.28$ normalized to 60 days, and then time-shifted for 4 days, obtaining: $\gamma(\tau)=0$, for $0 \le \tau \le 4$ and $\gamma(\tau)=l(\tau-4; \mu; \sigma)/\int_{4}^{65}l(\tau-4; \mu; \sigma)d \tau$, for $4 < \tau < 65$. Parameters were chosen to match mean recovery time of $25 \pm 6$ days.
 
 In the analysis, the continuous-time non-Markovian SEIS model was simulated using the integral form (\ref{contsum}), while varying the parameter $\beta^{NM}$ in the $[0,1)$ range. The constant of integration $\Delta \tau = 0.1$ was used in all cases, except for the BA1000 graph with $\beta^{NM}>0.81$, where, for the reasons of numerical stability, the constant of integration was reduced to $\Delta \tau = 0.05$. The Markov SIS model was integrated with forward Euler method, with numerical constant $\Delta t=0.01$. Parameter $\gamma^M=0.01$, was arbitrarily chosen, and parameter $\beta^M$ calculated from (\ref{conteqiv}) varied in the $[0,0.162568]$ region.
 
 Results of the analysis in the continuous-time case are presented in Fig. \ref{slika_cont}. 
\begin{figure} [h!]
\centering
 \includegraphics[width=0.8\columnwidth]{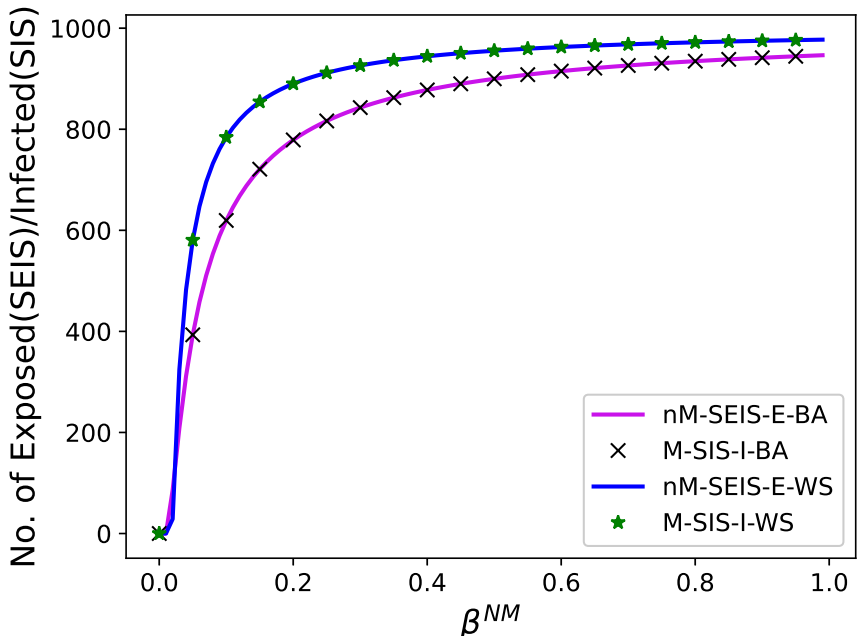}
 \caption{Comparison between the steady-state values of the number of Exposed individuals in the continuous-time non-Markovian SEIS model and the Markov SIS model. Meaning of the symbols in the legend: nM--non-Markovian, M--Markov; E--Exposed, I--Infected; BA--BA1000graph, WS--WS1000graph.}
 \label{slika_cont} 
 \end{figure}

As in the discrete-time case, the results from the analysis indicate that there is a perfect overlap of stationary state solutions of both the continuous-time non-Markovian SEIS model and the Markov SIS model, providing relation (\ref{conteqiv}) holds. However, one should exercise caution when considering the arbitrary choice of one of the parameters $\beta^M$ or $\gamma^M$, in the continuous time case: the application of relation (\ref{conteqiv}), should result in $\beta^M<1$ or $\gamma^M<1$, providing $\gamma^M$ or $\beta^M$ are arbitrary chosen, respectively.

\section{Representation of Markov SIS models as non-Markovian SEIS models}

In this Section, as a result of secondary importance, We show that every classical (Markov) SIS model occurring on complex networks, may be represented as non-Markovian SEIS model, with proper selection of DTPFs/CTPFs $B(\tau)$ and $\gamma(\tau)$. Illustrative numerical example of this feature for the discrete-time case, has already been presented in \cite{nmseis}. In this article, this feature is theoretically investigated and shown for both model forms in a concise mathematical procedure.

As a note to the readers, in what follows, we avoid the use of label superscripts, since the whole procedure is conducted on the non-Markovian SEIS model, that under the investigated circumstances reduces to the Markov SIS form. 

\subsection{Discrete-time case}

Consider, as suggested in \cite{nmseis}, $B(\tau)=1$, for all $\tau$, and $\gamma(\tau)=\gamma (1-\gamma)^{\tau-1} s(\tau-1)$, with $0<\gamma \le 1$, with $s(\tau)$ being the Heaviside function. Consequently, $\Gamma(0)=0$, $\overline{\Gamma}(0)=1$,  $\Gamma(\tau)=1-(1-\gamma)^{\tau}$, $\overline{\Gamma}(\tau)=(1-\gamma)^{\tau}$. Acting similar as in \cite{nmsir}, we obtain:
\begin{eqnarray}
&&p^I_i(t+1)=p^E_i(t+1)=\sum_{\tau=0}^{T-1}\overline{\Gamma} (\tau)(1-p^E_i(t-\tau)) \mathcal{P}_i(t-\tau)=  \nonumber\\
&&=(1-p^E_i(t)) \overline{\Gamma} (0)\mathcal{P}(t) + \sum_{\tau=1}^{T-1}(1-p^E_i(t-\tau))\overline{\Gamma} (\tau) \mathcal{P}_i(t-\tau)= \nonumber\\
&&=(1-p^E_i(t)) \mathcal{P}(t) +(1-\gamma) \sum_{\tau=1}^{T}(1-p^E_i(t-\tau)) \overline{\Gamma} (\tau-1)\mathcal{P}_i(t-\tau) - \nonumber\\ 
&&-(1-p^E_i(t-T))  (1-\gamma)^T \mathcal{P}_i(t-T)= \nonumber\\
&&=(1-p^E_i(t)) \mathcal{P}(t) + (1-\gamma) p^E_i(t) - (1-p^E_i(t-T))  (1-\gamma)^T \mathcal{P}_i(t-T) \nonumber
\end{eqnarray}

When $T \rightarrow \infty$, the last term in the equation vanishes. Bearing in mind that $p^E_i(t)=p^I_i(t)$, the following relation holds:
\begin{eqnarray}
p^I_i(t+1)&=&p^E_i(t+1)=\nonumber\\
&=&(1-p^I_i(t)) \left(1-\prod_{j=1}^N (1-p^I_j(t)  a_{ij} \beta)\right) + (1-\gamma) p^I_i(t) \label{clasSIS}\\
&=& (1-p^E_i(t)) \left(1-\prod_{j=1}^N (1-p^E_j(t)  a_{ij} \beta)\right) + (1-\gamma) p^E_i(t). \nonumber
\end{eqnarray}
The last equation is the equation of the Markov SIS model (\ref{classSIS}). To summarize, for arbitrary values of parameters $\beta$ and $\gamma$, the discrete-time SIS model, may be represented as non-Markovian SEIS model, providing $B(\tau)$ and $\gamma(\tau)$ satisfy the relations defined in the introduction of this Subsection.

\subsection{Continuous-time case}

By considering $B(\tau)=1$, consequently $p^E_i(t)=p^I_i(t)$, and $\overline{\Gamma}(\tau)=exp(-\gamma \tau)$ and $T \rightarrow \infty$ (see \cite{nmsir} for details) one obtains:
\begin{eqnarray}
\frac{dp^I_i(t)}{dt}&=& (1-p^I_i(t)) \sum_{j=1}^N p^I_j(t) a_{ij} \beta -\nonumber\\&-& \gamma \int_{0}^{T} (1-p^I_i(t-\tau)) s(\tau)\sum_{j=1}^N p^I_j(t-\tau) a_{ij} \beta e^{-\gamma \tau}  d\tau\nonumber\\
&=& (1-p^I_i(t)) \sum_{j=1}^N p^I_j(t) a_{ij} \beta \nonumber\\ &-& \gamma \int_{0}^{T} (1-p^I_i(t-\tau)) s(\tau) \sum_{j=1}^N p^I_j(t-\tau) a_{ij} \beta \overline{\Gamma} (\tau) d\tau\nonumber\\
&=& (1-p^I_i(t)) \sum_{j=1}^N p^I_j(t) a_{ij} \beta - \gamma p^I_i(t), \nonumber
\end{eqnarray}
The last relation is identical with the classical formulation of the Markov SIS model occurring on complex network, in continuous time  (\ref{classicalSIScont}). The relation confirms that, for a given Markov SIS, defined by parameters $\beta$ and $\gamma$, exists a non-Markovian SEIS model defined with the identical parameter $\beta$ and CTPFs $B(\tau)=1$ and $\overline{\Gamma}(\tau)=exp(-\gamma \tau)$, such that the non-Markovian SEIS mimics the behaviour of the Markov SIS model on complex networks.

\section{Conclusions}

The non-Markovian systems more accurately address the spreading processes in comparison with Markov models. This characteristic of non-Markovian models originates in their basic definition -- to consider the status transitions that accompany the spreading as non-Poisonous processes, as confirmed by every-day practices. On the adverse side, the numerical analysis of non-Markovian processes is computationally more demanding. The computational complexity increases with the process memory, i.e. with parameter $T$. Even further, in continuous models the choice of an accurate integration step may require substantial amount of computer memory, in order to store and manipulate with an excessive number of preceding states. When these features are accompanied by a huge network, the analysis of a non-Markovian, in this case SEIS, model, may become an overwhelming task for all, but a fairly small number of computing devices. 

In this article we have shown that for the basic non-Markovian re-occurring model, the SEIS model, the stationary state distributions of nodes being Exposed/Infectious, may be found from a Markov SIS equivalent. This result is of at-most importance, since it allows the numerical analysis of the stationary state solutions of the spreading processes to be conducted on systems with standard computational capabilities. In that sense, the result contributes to bringing the computational analysis of non-Markovian models closer to more users, especially individual researchers, small research teams and organizations, that may not be able to acquire a sufficiently powerful computing equipment.

\section*{References}
\bibliographystyle{iopart-num}
\bibliography{mybibfile}

\providecommand{\newblock}{}
\begin{thebibliography}{10}
\expandafter\ifx\csname url\endcsname\relax
  \def\url#1{{\tt #1}}\fi
\expandafter\ifx\csname urlprefix\endcsname\relax\def\urlprefix{URL }\fi
\providecommand{\eprint}[2][]{\url{#2}}

\bibitem{starnini}
Starnini M, Gleeson J~P and Bogu\~n\'a M 2017 {\em Phys. Rev. Lett.\/} {\bf
  118}(12) 128301
  \urlprefix\url{https://link.aps.org/doi/10.1103/PhysRevLett.118.128301}

\bibitem{Nowzari}
{Nowzari} C, {Ogura} M, {Preciado} V~M and {Pappas} G~J 2015 A general class of
  spreading processes with non-markovian dynamics {\em 2015 54th IEEE
  Conference on Decision and Control (CDC)\/} pp 5073--5078

\bibitem{boguna1}
Bogu\~n\'a M, Lafuerza L~F, Toral R and Serrano M~A 2014 {\em Phys. Rev. E\/}
  {\bf 90}(4) 042108
  \urlprefix\url{https://link.aps.org/doi/10.1103/PhysRevE.90.042108}

\bibitem{delft_nm1}
Van~Mieghem P and van~de Bovenkamp R 2013 {\em Phys. Rev. Lett.\/} {\bf
  110}(10) 108701
  \urlprefix\url{https://link.aps.org/doi/10.1103/PhysRevLett.110.108701}

\bibitem{delft_nm2}
Van~Mieghem P and Liu Q 2019 {\em Phys. Rev. E\/} {\bf 100}(2) 022317
  \urlprefix\url{https://link.aps.org/doi/10.1103/PhysRevE.100.022317}

\bibitem{delft_nm3}
Liu Q and Van~Mieghem P 2018 {\em Phys. Rev. E\/} {\bf 97}(2) 022309
  \urlprefix\url{https://link.aps.org/doi/10.1103/PhysRevE.97.022309}

\bibitem{kiss1}
Georgiou N, Kiss I~Z and Scalas E 2015 {\em Phys. Rev. E\/} {\bf 92}(4) 042801
  \urlprefix\url{https://link.aps.org/doi/10.1103/PhysRevE.92.042801}

\bibitem{kiss2}
Kiss I~Z, R\"ost G and Vizi Z 2015 {\em Phys. Rev. Lett.\/} {\bf 115}(7) 078701
  \urlprefix\url{https://link.aps.org/doi/10.1103/PhysRevLett.115.078701}

\bibitem{kiss3}
Sherborne N, Miller J~C, Blyuss K~B and Kiss I~Z 2018 {\em Journal of
  Mathematical Biology\/} {\bf 76} 755--778 ISSN 1432-1416
  \urlprefix\url{https://doi.org/10.1007/s00285-017-1155-0}

\bibitem{Feng2019}
Feng M, Cai S~M, Tang M and Lai Y~C 2019 {\em Nature Communications\/} {\bf 10}
  3748 ISSN 2041-1723
  \urlprefix\url{https://doi.org/10.1038/s41467-019-11763-z}

\bibitem{chun}
Chun J~Y, Baek G and Kim Y 2020 {\em International Journal of Infectious
  Diseases\/} {\bf 99} 403--407 ISSN 1201-9712
  \urlprefix\url{https://www.sciencedirect.com/science/article/pii/S1201971220306123}

\bibitem{Qineabc1202}
Qin J, You C, Lin Q, Hu T, Yu S and Zhou X~H 2020 {\em Science Advances\/} {\bf
  6} (\textit{Preprint}
  \eprint{https://advances.sciencemag.org/content/6/33/eabc1202.full.pdf})
  \urlprefix\url{https://advances.sciencemag.org/content/6/33/eabc1202}

\bibitem{He2020}
He X, Lau E~H~Y, Wu P, Deng X, Wang J, Hao X, Lau Y~C, Wong J~Y, Guan Y, Tan X,
  Mo X, Chen Y, Liao B, Chen W, Hu F, Zhang Q, Zhong M, Wu Y, Zhao L, Zhang F,
  Cowling B~J, Li F and Leung G~M 2020 {\em Nature Medicine\/} {\bf 26}
  672--675 ISSN 1546-170X
  \urlprefix\url{https://doi.org/10.1038/s41591-020-0869-5}

\bibitem{Perelson2002}
Perelson A~S 2002 {\em Nature Reviews Immunology\/} {\bf 2} 28--36 ISSN
  1474-1741 \urlprefix\url{https://doi.org/10.1038/nri700}

\bibitem{nmseis}
Tomovski I, Basnarkov L and Abazi A 2021 {\em IEEE Transactions on Network
  Science and Engineering\/}  1--1 ISSN 2327-4697 (in press)

\bibitem{nmsir}
Basnarkov L, Tomovski I, Sandev T and Kocarev L 2021 {\em arXiv:2107.07427v1
  [physics.soc-ph]\/} \urlprefix\url{https://arxiv.org/abs/2107.07427}

\bibitem{wang}
Wang Y, Chakrabarti D, Wang C and Faloutsos C 2003 Epidemic spreading in real
  networks: An eigenvalue viewpoint {\em In Proc. of the 22nd International
  Symposium on Reliable Distributed Systems - IEEE SRDS'03\/} pp 25--34

\bibitem{Chakrabarti}
Chakrabarti D, Wang Y, Wang C, Leskovec J and Faloutsos C 2008 {\em ACM Trans.
  Inf. Syst. Secur.\/} {\bf 10} 13:1 -- 13:26

\bibitem{gomez1}
G\'{o}mez S, Arenas A, Borge-Holthoefer J, Meloni S and Moreno Y 2010 {\em Eur.
  Phys. Lett.\/} {\bf 89} 38009

\bibitem{gomez2}
G\'{o}mez S, Arenas A, Borge-Holthoefer J, Meloni S and Moreno Y 2011 {\em Int.
  J. Complex Systems in Science\/} {\bf 1} 47--51

\bibitem{delft}
Van~Mieghem P, Omic J and Kooij R 2009 {\em IEEE/ACM Trans. Netw.\/} {\bf 17} 1
  -- 14

\bibitem{basnarkov2021seair}
Basnarkov L 2021 {\em Chaos, Solitons \& Fractals\/} {\bf 142} 110394 ISSN
  0960-0779
  \urlprefix\url{https://www.sciencedirect.com/science/article/pii/S0960077920307876}

\bibitem{ganesh}
Ganesh A~J, Massoulie L and Towsley D~F 2005 The effect of network topology on
  the spread of epidemics {\em In Proc. IEEE Infocom\/} vol~2 pp 1455--1466

\bibitem{BA}
Albert R and Barab\'{a}si A~L 2002 {\em Reviews of Modern Physics\/} {\bf 74}
  47 -- 97

\bibitem{WS}
Watts D Jand~Strogatz S~H 1998 {\em Nature\/} {\bf 393} pp. 440–442

\bibitem{Qin2020}
Qin J, You C, Lin Q, Hu T, Yu S and Zhou X~H 2020 {\em medRxiv : the preprint
  server for health sciences\/}  2020.03.06.20032417 32511426[pmid]
  \urlprefix\url{https://pubmed.ncbi.nlm.nih.gov/32511426}

\end{thebibliography}

\end{document}